\title{Special Guards in Chromatic Art Gallery~\footnote{ This paper was presented in 17th March 2015 in EuroCG 2015, Ljubljana, Slovenia.}}
\author{Hamid Hoorfar\thanks{Laboratory of Algorithm and Computational Geometry, Faculty of Mathematics and Computer Science,
        Amirkabir University of Technology, Iran.}
        \and
        Ali Mohades\footnotemark[2]}
\begin{document}
\maketitle

\begin{abstract}
We present two new versions of the chromatic art gallery problem that can improve upper bound of the required colors pretty well. In our version, we employ restricted angle guards so that these modern guards can visit $\alpha$-degree of their surroundings. If $\alpha$ is between 0 and 180 degree, we demonstrate that the strong chromatic guarding number is constant. Then we use orthogonal 90-degree guards for guarding the polygons. We prove that the strong chromatic guarding number with orthogonal guards is the logarithmic order. First, we show that for the special cases of the orthogonal polygon such as snake polygon, staircase polygon and convex polygon, the number of colors is constant. We decompose the polygon into parts so that the number of the conflicted parts is logarithmic and every part is snake. Next, we explain the chromatic art gallery for the orthogonal polygon with guards that have rectangular visibility. We prove that the strong chromatic guarding number for orthogonal polygon with rectangular guards is logarithmic order, too. We use a partitioning for orthogonal polygon such that every part is a mount, then we show that a tight bound for strong chromatic number with rectangular guards is ${\theta}(\log n)$
\end{abstract}

\section{Introduction}
New approach of the art gallery problem was raised with Erickson and LaValle~\cite{lavalle}, which maximized the compatible guards so that for two guards whom their intersection of visibility polygons is not empty must be spent a new color as cost. In the other words, the chromatic art gallery find the minimum number of colors that always sufficient and sometimes necessary for guarding the entire polygon. It is called the chromatic guarding number. Let $\chi_{G}(P)$ denotes the chromatic guarding number of polygon P. We extend this notation so that $\chi_{G}^{\alpha}(P)$ denotes the chromatic guarding number with $\alpha$-degree guards, and  $\chi_{G}^{rec}(P)$ denote the chromatic guarding number with rectangular guards, properly. The motivation of offering this problem was in robot controlling with wireless navigators whom we can set the angle of their ranges. In many cases, these navigators have the 90-degree range, and they are orthogonal corresponding to the environment, because of it, we introduce the chromatic art gallery with orthogonal 90-degree guards in the orthogonal polygons. Erickson and LaValle showed that for a spiral polygon, the chromatic guarding number is at most 2 and for a staircase polygon is at most 3, then they showed that for every positive number k, there exists a polygon with $4k$ vertices such that$\chi_{G}(P_{k})\geq k$ . Also, they showed that for every odd number k there is an orthogonal polygon with $4k^2+10k+10$ vertices such that $\chi_{G}(P_{k})\geq k$ ~\cite{lavalle}. We extend these result with $\alpha$-degree guards so that $\chi_{G}^{\alpha}(P)\leq 2$  and for orthogonal guards $\chi_{G}^{O}(P)\in O(\log n)$ wherever $\chi_{G}^{O}(P)$ denotes the chromatic guarding number with the orthogonal 90-degree guards. After that, we extend these result with rectangular guards so that $\chi_{G}^{rec}(P)\in {\theta}(\log n) $

\section{Basic Definitions}
Let polygon P be a connected simple subset of $\Bbb{R}^2$ with $\partial$P as its boundary. $p\in P$ is visible from $q\in P$  if the segment  pq is a subset of $P$. For every point $p$ in the polygon, $V(p)$ indicates visibility polygon  of p so that $V(p)=\{q\in P\,\, s.t.\,\, p\,  is\, visible \, from\,  q\}$. The guard set $S$ is a finite set of points in the polygon such that $\bigcup_{s\in S}V(s)=P$, every element of $S$ is called guard~\cite{Rourke}. A pair of guards is named incompatible whenever $V(s)\cup V(t)\neq \emptyset$. Let $C(s)$ denotes the minimum colors necessary for coloring the guard set $S$ so that every pair of incompatible guards have different colors.  Furthermore, let $T(p)$ denote the set of all guard sets in $P$, and $\chi_{G}(P)=min_{s\in T(p)}  C(s)$  is called the chromatic guarding number. The chromatic art gallery problem minimizes the chromatic guarding number rather the guarding number~\cite{lavalle2}. Let $\alpha$-guard denotes the guard whom its visual field is $(v,v+\alpha)$ wherever $v$ is an arbitrary angle and $\alpha \in (0,180]$, for instance, 90-guard is a guard with visual field equal to $(v,v+90)$. In addition, let O-guard denotes the orthogonal 90-guard. Suppose $\chi _{G}^{\alpha} (p)$  denotes the chromatic guarding number with the $\alpha$-guards, and we extend the notations so that $\chi _{G}^O (p)$  indicates the chromatic guarding number with the O-guards as well. Also:\\
$V^{\alpha} (g)=\{p\in P\mid q\,\, is\,\, visible\,\, from\,\, \alpha-guard\,\, g\}$\\
$V^{O} (g) =\{q\in P\mid q\,\, is\,\, visible\,\, from\,\, o-guard\,\, g\}$\\
In the orthogonal polygon, a horizontal (vertical) edge $e_h$ ($e_v$) that its two end points are reflex vertices is called h-cut edge (v-cut edge)~\cite{Ghosh}. For any two points p and q in a rectilinear polygon P, if the aligned rectangle with p and q as opposite corners lies totally inside P, then p and q are called rectangularly visible~\cite{Ghosh}. Also for any point p in polygon P, $V^{rec} (p)$ denotes the rectangular visibility area of P, such that:\\$V^{rec} (p) =\{q\in P\vert p\, is\, rectangularly\, visible\, from\, q\}$\\ Every guard can see around it rectangularly is represented by the r-guard. Assume guard set S that is a finite set of r-guards in the polygon so that $\bigcup_{s\in S}V^{rec}(s)=P$ then we call that P is r-visible from S. we extend the use of the notations to r-guards as $\chi_{G}^{rec} (P)=\min_{S\in T(p)}{C(S)}$ whenever S is an r-guard set.
And assume these notations:\\
$V^{\bot} (p)=\{ q\in P\vert q\, is\, orthogonally\, visible\, from\, point\, p\}$\\
$V^{\bot} (e)=\{ q\in P\vert q\, is\, orthogonally\, visible\, from\, segment\, e\}$

\begin{obs}
Every orthogonal polygon that has no h-cut edge (v-cut edge) is y-monotone (x-monotone) polygon. As the figure ~\ref{fi:fig1}, it is showed the orthogonal x-monotone polygon has no v-cut edges.

\begin{figure}
	\includegraphics[width=\columnwidth]{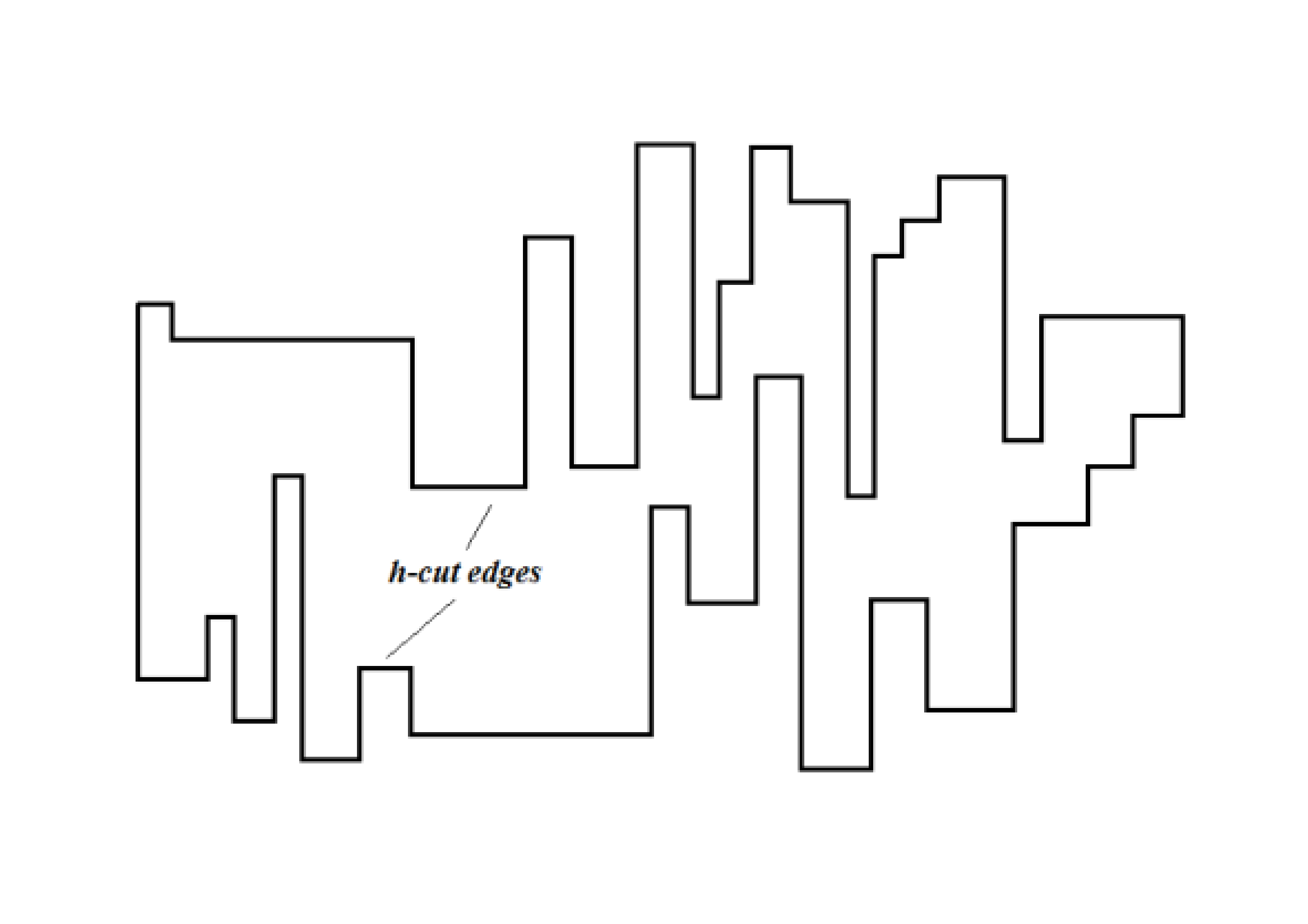}
	\caption{An orthogonal polygon that has no h-cut edge.}
	\label{fi:fig1}
\end{figure}

\end{obs}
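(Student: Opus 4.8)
The plan is to prove the contrapositive: if an orthogonal polygon $P$ is not $y$-monotone, then it must contain an h-cut edge; the $x$-monotone/v-cut statement then follows by rotating the plane by $90$ degrees. First I would set up a local classification of horizontal edges. Traversing $\partial P$ counterclockwise (interior on the left), every horizontal edge is either a \emph{floor} edge (interior immediately above it, traversed in the $+x$ direction) or a \emph{ceiling} edge (interior immediately below it, traversed in the $-x$ direction). At each endpoint a vertical edge is attached, pointing up or down, and a short turn-angle computation shows that an endpoint is reflex exactly when the attached vertical edge lies below a floor edge (respectively above a ceiling edge). Consequently the local extrema of the height coordinate $y$ along $\partial P$ are precisely the horizontal edges whose two vertical neighbours point the same way, and among these the \emph{anomalous} ones---a floor edge that is a local maximum, or a ceiling edge that is a local minimum---are exactly the h-cut edges, since both of their endpoints are then reflex.

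Next I would run a horizontal sweep. Let $N(c)$ be the number of connected intervals of $P\cap\{y=c\}$. As $c$ decreases from $+\infty$ to $-\infty$, $N$ starts and ends at $0$ and changes only when the line passes a horizontal edge: it increases by $1$ at each local maximum and decreases by $1$ at each local minimum. More precisely, a normal (ceiling) maximum \emph{creates} a new interval and a normal (floor) minimum \emph{destroys} one, whereas an anomalous floor maximum \emph{splits} one interval into two and an anomalous ceiling minimum \emph{merges} two intervals into one. In particular, $P$ is $y$-monotone if and only if $N(c)\le 1$ for every $c$.

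Assume now that $P$ is not $y$-monotone, so $N(c)\ge 2$ for some $c$. Sweeping downward, consider the first height at which $N$ jumps from $1$ to $2$; this occurs when the line crosses a local maximum $M$. If $M$ is an anomalous floor maximum it splits the current interval, so $M$ itself is an h-cut edge and we are done. Otherwise $M$ is a normal maximum that gives birth to a second interval disjoint from the one already present, and here the main obstacle appears: I must invoke simplicity and connectedness of $P$ to argue that these two interior components cannot remain separate all the way down. Since $P$ is a single bounded simple polygon, the two components must eventually meet, and in an orthogonal polygon two interior components can only merge across a horizontal edge bounding the bottom of the exterior gap between them---a ceiling edge that is a local minimum with both vertical neighbours pointing upward, i.e.\ again an h-cut edge. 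This closes the contrapositive.

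The delicate point is exactly this merge step, which I would justify as follows. If the two components instead died separately at two distinct floor minima without ever merging, their interiors would occupy disjoint regions, contradicting connectedness of $P$; and a merge cannot happen at an isolated pinch point without $\partial P$ ceasing to be simple, so it must occur along a genuine horizontal ceiling edge whose two bottom corners are both reflex. Everything else is routine local bookkeeping of turn angles, and the $x$-monotone/v-cut case requires no separate work beyond the $90$-degree rotation already mentioned, matching Figure~\ref{fi:fig1}.
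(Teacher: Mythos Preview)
Your argument is correct, and it supplies far more than the paper does: the paper records this statement as a bare observation, offering no proof beyond a reference to Figure~\ref{fi:fig1}. Your contrapositive sweep-line approach is sound. The identification of h-cut edges with the ``anomalous'' horizontal extrema---floor edges that are local $y$-maxima and ceiling edges that are local $y$-minima---is exactly right, and these are precisely the split and merge events of the horizontal sweep. The one place you hesitate, the merge step, can be streamlined: if $P$ had no h-cut edges then the sweep would have neither splits nor merges, so the Reeb graph of the height function would be a disjoint union of paths; connectedness of $P$ forces it to be a single path, hence $N(c)\le 1$ for all $c$ and $P$ is $y$-monotone. Your version, tracking the first $1\to 2$ jump and arguing that the two resulting pieces must eventually merge inside the simply connected region, reaches the same conclusion and is fine as written.
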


In the following we definite a new special polygon, we call it, the snake polygon. 
\begin{defini}
A polygon P is called snake polygon if:
\\ \hspace{1.1em} 1. P is an orthogonal x-monotone (y-monotone).
\\ \hspace{1.1em} 2. Extend every h-cut edge decompose polygon into three sub polygons, exactly.
\\ \hspace{1.1em} 3. One of its sub polygon must be xy-monotone.
see figure ~\ref{fi:fig2}. 
\end{defini}
As the sample, every staircase polygon is a snake polygon; similarly, every xy-monotone polygon is a snake polygon as well.
\begin{figure}
	\includegraphics[width=\columnwidth]{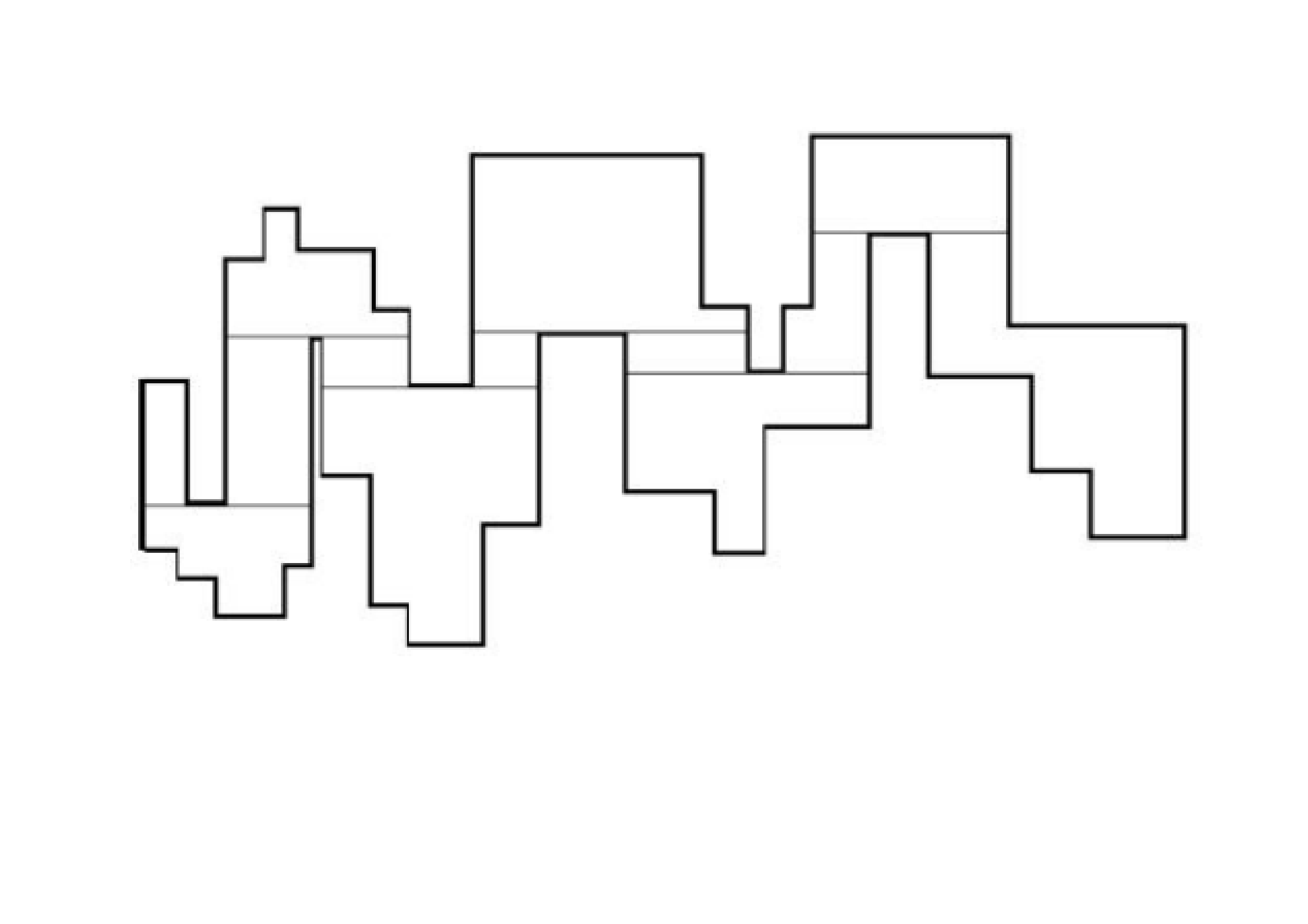}
	\caption{A snake polygon}
	\label{fi:fig2}
\end{figure}
We definite two special polygons, mounts polygon and mount polygon as following.
\begin{defini}
Polygon P is called mounts polygon if it has at least an edge $e\in P$  such that $V^{\bot}(e) =P$ , nominate e, base edge.
\end{defini}

\begin{defini}
The mounts polygon P is named mount polygon if P is monotone corresponding to a line perpendicular to its base edge.
\end{defini}

\section{A tight bound on the chromatic guarding number of the general polygon with $\alpha$-guards}

Consider a simple polygon $P$. Select an arbitrary edge $e_{1}$ then place a 180-guard $g_1$ on it.  Process $V^{180} (g_{1} )$~\cite{Elgindy}, every connected part of its boundary, which is not belonged $\partial$P is called window. Continually place 180-guards on the created windows of previous step until the entire polygon is covered, see figure ~\ref{fi:fig3}

\begin{figure}
	\includegraphics[width=\columnwidth]{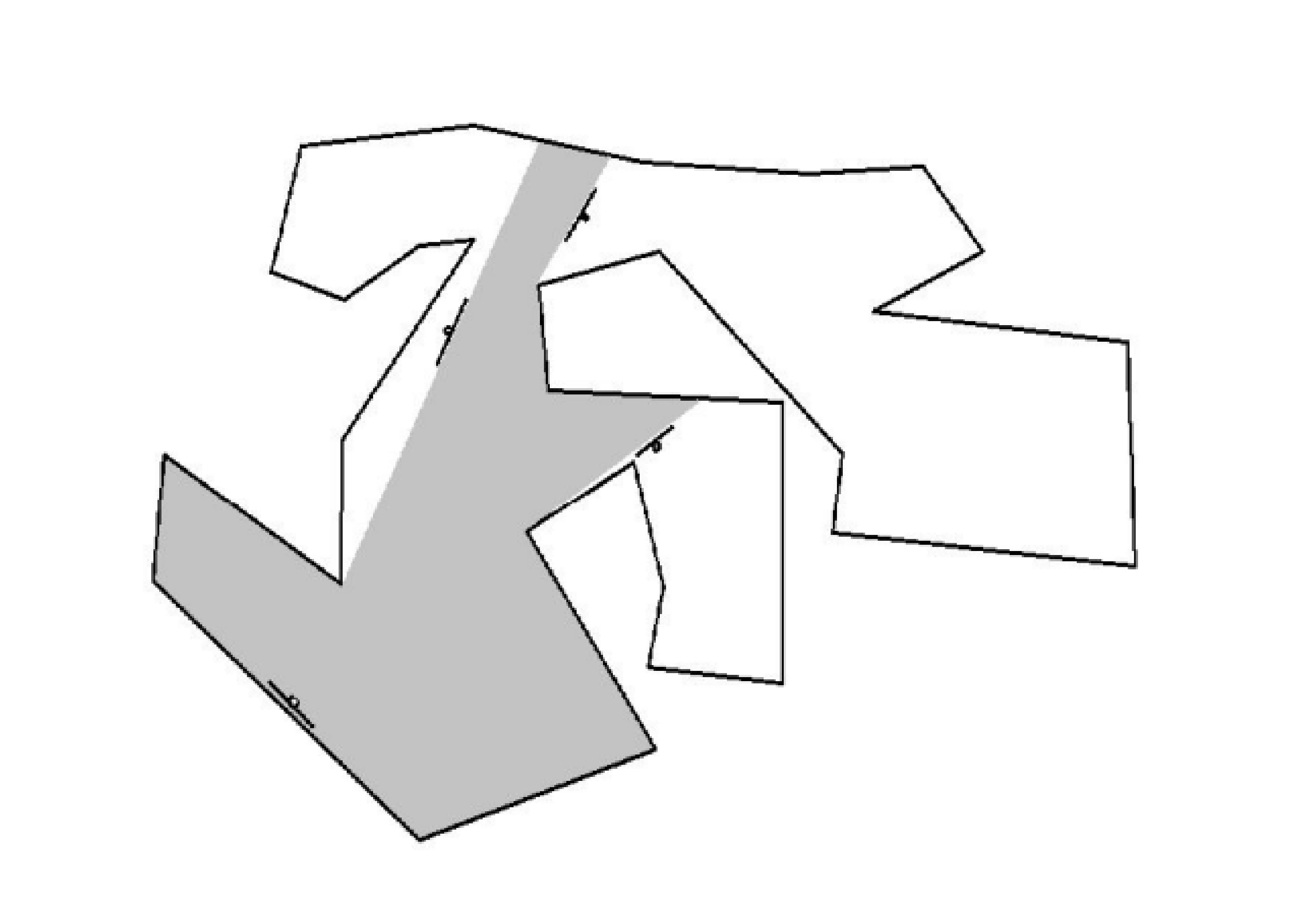}
	\caption{180-degrees guard}
	\label{fi:fig3}
\end{figure}

We demonstrated that in the general polygon with 180-degree guards the chromatic guarding number is 1, i.e.$\chi_{G}^{180} (p)=1$. Now, we can replace every 180-guard with at most two colors of $\alpha$-guards.  We want $\lfloor$$180\over\alpha$$\rfloor$ guards in same color and perhaps one guard in different color wherever $180\over\alpha$ is not an integer such that it is shown in figure ~\ref{fi:fig4}. Hence, if $0<\alpha \leq180$, then $\chi_{G}^{\alpha} (p)\leq2$.

\begin{figure}
	\includegraphics[width=\columnwidth]{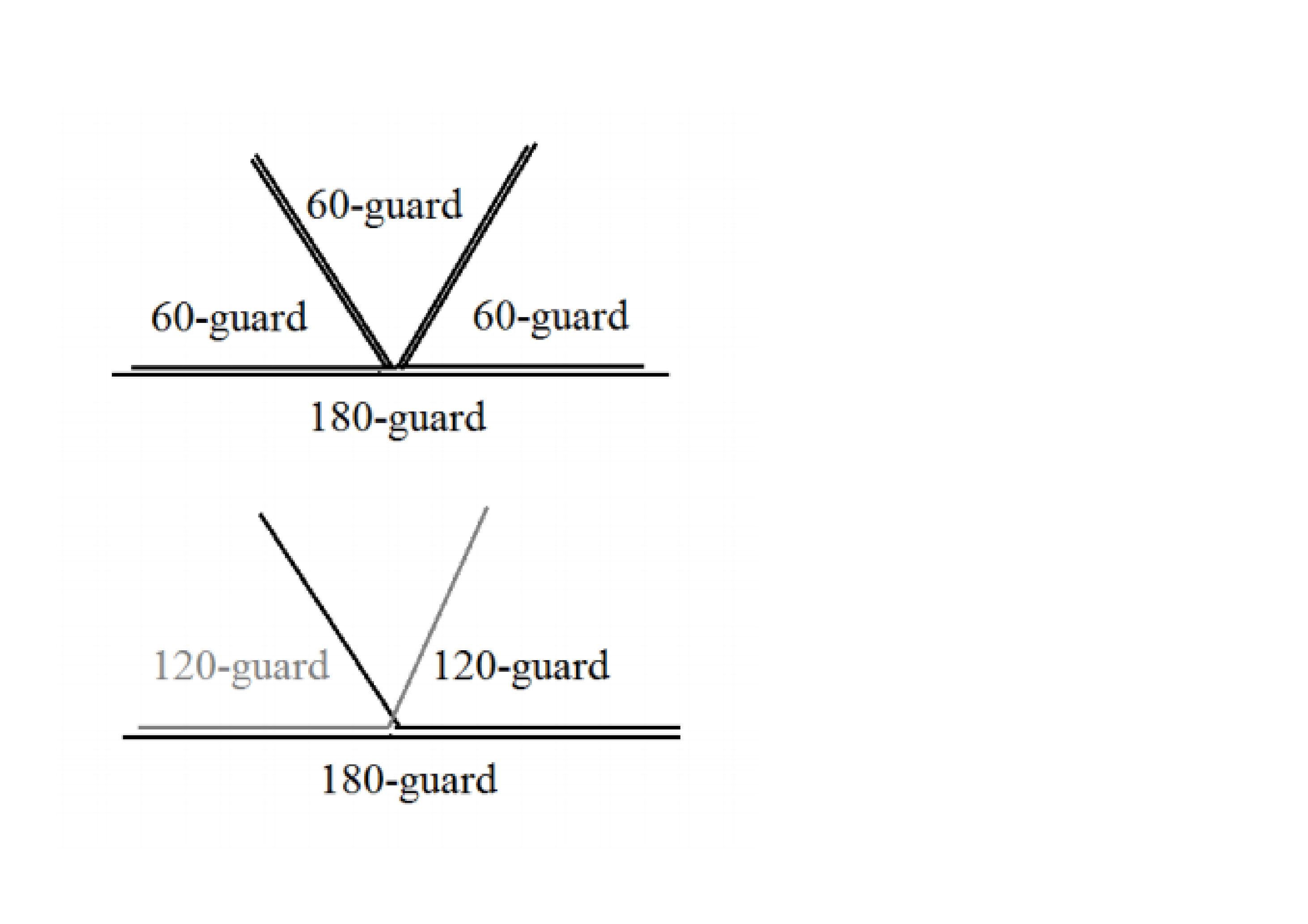}
	\caption{Replace a 180-degrees guard with $\alpha$-guards}
	\label{fi:fig4}
\end{figure}

\section{Some special polygons with O-guards}

Suppose snake polygon P, we present a partitioning so that its chromatic guarding number will be constant. Extend all h-cut edges in the snake polygon, because of this, the polygon decomposes to sub polygons such that all of them are staircase or mount polygons. We can guard all these staircase parts with one color of the O-guard independently as any parts are conflict-free from others, see figure ~\ref{fi:fig5}. The green parts are the mount polygons, and the blue ones are staircase polygons.

\begin{figure}
	\includegraphics[width=\columnwidth]{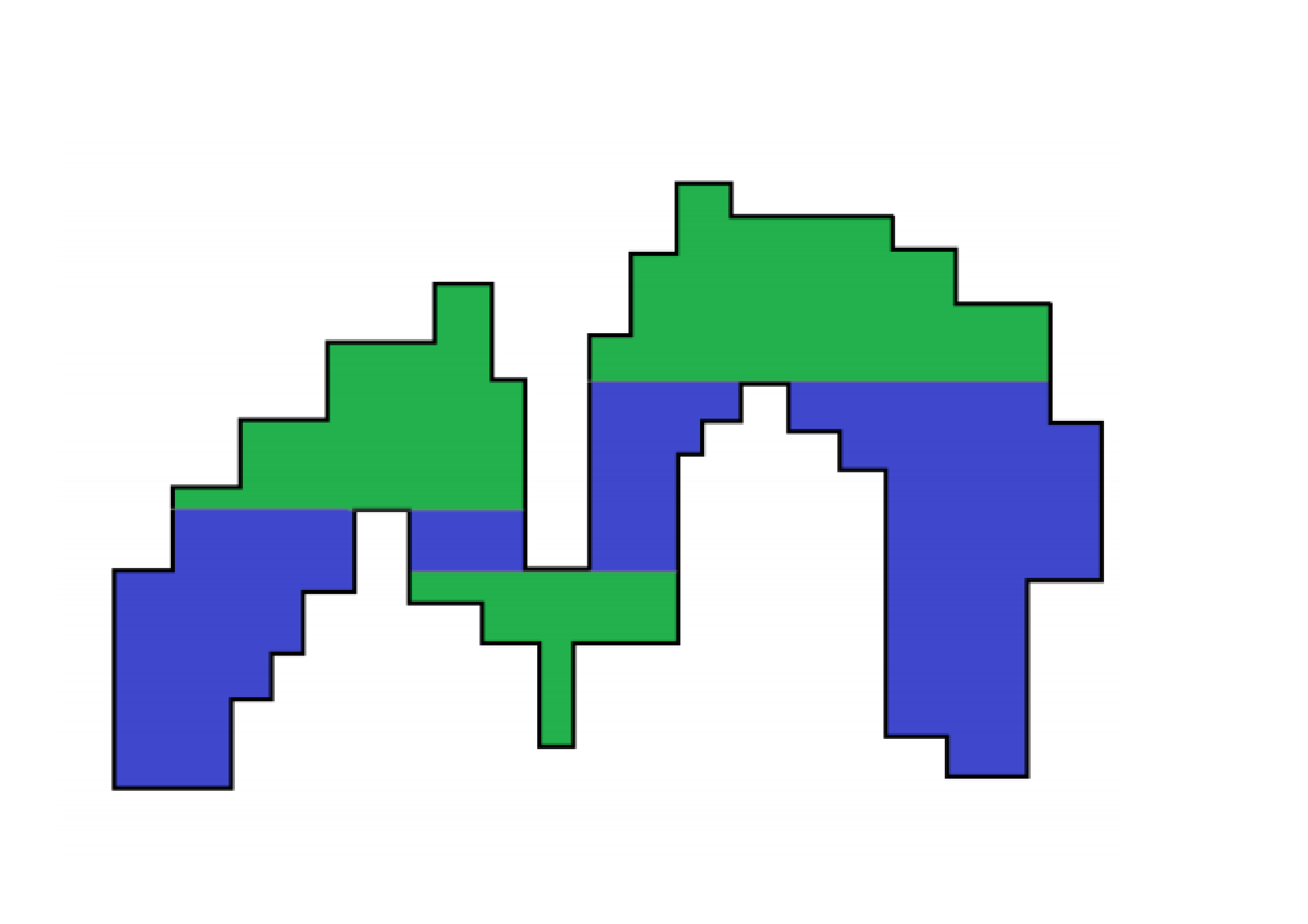}
	\caption{O-guarding a snake polygon}
	\label{fi:fig5}
\end{figure}

\begin{obs}
For every staircase polygon $P$ with the condition that the bottom edge is not seen, $\chi_{G}^{O} (P)=1$.
\end{obs}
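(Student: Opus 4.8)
The plan is to exhibit a set of O-guards whose visibility regions are pairwise disjoint. Then no two of them are incompatible, the conflict graph is edgeless, and one colour suffices, so $\chi_{G}^{O}(P)\le 1$; since a non-empty polygon needs at least one guard this is matched by $\chi_{G}^{O}(P)\ge 1$, giving the equality.

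To construct such a set I would first read the staircase structure off the boundary of $P$. Extending the two edges incident to each reflex vertex cuts $P$ into a sequence of rectangular steps $R_{1},\dots,R_{m}$ that run monotonically up and to the right. Consecutive steps share an edge, and because the band turns at every step the two edges that $R_{j}$ shares with $R_{j-1}$ and $R_{j+1}$ lie on \emph{adjacent} sides of $R_{j}$ and meet at a common corner $c_{j}$, which is exactly a reflex vertex of $P$. Across the remaining two \emph{free} sides of $R_{j}$ one finds only the exterior or the excluded bottom edge, never another step.

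Next, for each step I would place a single O-guard $g_{j}$ at $c_{j}$ and aim its $90^{\circ}$ field into the quadrant delimited by the two lines that carry the shared edges, on the side containing $R_{j}$; at the two extreme steps, which have only one neighbour, the quadrant is aimed merely to exclude that neighbour. The crux is the confinement claim $V^{O}(g_{j})=R_{j}$, which rests on three observations: the quadrant lies on $R_{j}$'s side of both shared-edge lines, so it excludes the neighbours $R_{j-1}$ and $R_{j+1}$; the rectangle $R_{j}$ is convex and $c_{j}$ is one of its corners, so $g_{j}$ reaches every point of $R_{j}$ along an axis-monotone segment; and the $x$- and $y$-monotonicity of the staircase forbids any other part of $P$ from re-entering the quadrant. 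Granting the claim, $V^{O}(g_{1}),\dots,V^{O}(g_{m})$ are the steps themselves; they partition $P$ and meet only along shared edges, so the guards are pairwise compatible and a single colour colours them all.

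The step I expect to be the genuine obstacle is upgrading confinement from ``disjoint interiors'' to honest disjointness, since under the literal definition two regions that merely abut along a shared tread or riser would still count as incompatible. I would dispose of this by using that the field of view $(v,v+90)$ is \emph{open}: each shared edge can be awarded to exactly one of its two steps by a consistent tie-break on the two bounding rays, equivalently by nudging each $g_{j}$ an infinitesimal distance into $R_{j}$. This is also precisely where the hypothesis ``the bottom edge is not seen'' is needed, for the bottom edge is the one side along which the lowest step has no wall beneath it; a guard forced to cover it would send its quadrant downward past every lower step and thereby demand a second colour. Delegating that edge to the adjacent sub-polygon leaves only the mutually confined step-guards, and $\chi_{G}^{O}(P)=1$ follows.
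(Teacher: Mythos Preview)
The paper does not actually prove this statement: it is labelled an \emph{observation} and the only justification offered is Figure~6, whose caption reads ``The staircase polygon and mount polygon can be guarding with one color of O-guards.'' Your argument is therefore not competing against a written proof but against a picture.

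Your construction---cut the staircase into its step rectangles, seat one O-guard at each reflex corner, and aim the $90^{\circ}$ wedge into its own step---is exactly the configuration the figure depicts, and your write-up supplies the reasoning the paper omits. The confinement claim $V^{O}(g_{j})\subseteq R_{j}$ is the right thing to isolate, and your use of $xy$-monotonicity to rule out re-entry of the polygon into the open quadrant is the correct mechanism. Your treatment of the open angular interval $(v,v+90)$ to break ties along shared treads/risers, and your explanation of why the excluded bottom edge is precisely what would otherwise force a downward-looking guard and hence a second colour, are both points the paper leaves entirely implicit.

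One small wording issue: ``extending the two edges incident to each reflex vertex'' would, taken literally, over-partition the polygon (you would get both horizontal and vertical cuts through every reflex vertex). What you actually want---and what your subsequent description makes clear you mean---is the standard step decomposition obtained by extending, say, the horizontal edge at each reflex vertex until it hits $\partial P$. With that clarification the argument is sound and strictly more informative than what the paper provides.
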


\begin{obs}
For every mount polygon $P$ with this condition that the bottom edge is not seen, $\chi_{G}^{O} (P)=1$, figure ~\ref{fi:fig6}  shows, it is clear.

\begin{figure}
	\includegraphics[width=\columnwidth]{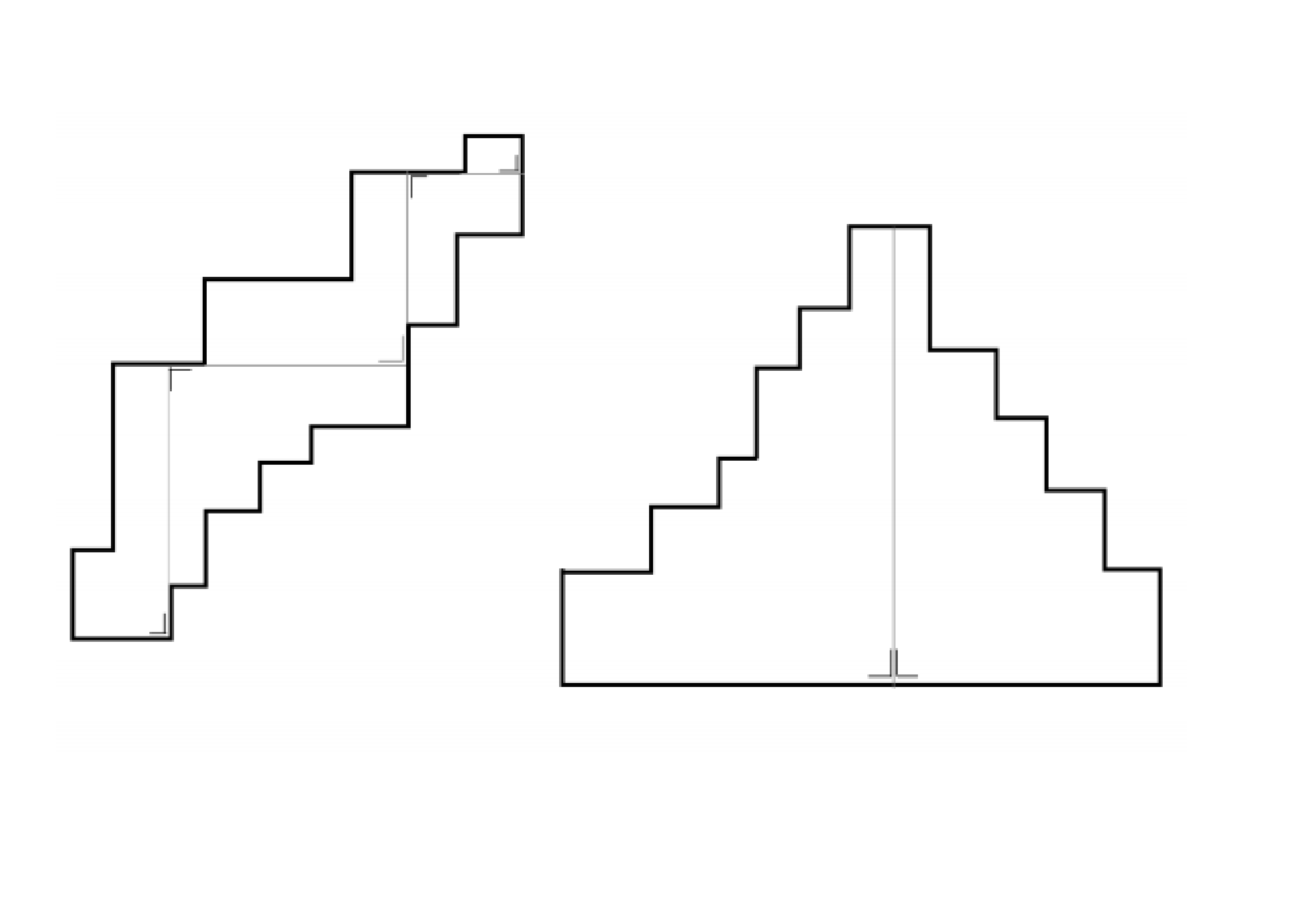}
	\caption{The staircase polygon and mount polygon can be guarding with one color of O-guards}
	\label{fi:fig6}
\end{figure}

\end{obs}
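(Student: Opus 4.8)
The plan is to exploit the unimodal structure that the two defining properties of a mount force on $P$, and then reduce the claim to the preceding staircase observation. Place the base edge $e$ horizontally at the bottom. Since $P$ is a mounts polygon we have $V^{\bot}(e)=P$, so every point of $P$ is joined to $e$ by a vertical segment lying inside $P$; this makes $P$ a histogram over $e$, and in particular $x$-monotone. The extra requirement that $P$ be monotone with respect to the vertical line perpendicular to $e$ makes every horizontal chord connected, so the height function of the top boundary is unimodal: read from left to right the top staircase first ascends to a unique topmost edge---the \emph{peak}---and then descends. I would fix the peak edge $[a,b]\times\{H\}$ and cut $P$ with the two vertical lines $x=a$ and $x=b$ into a left part $P_{L}=P\cap\{x\le a\}$, the peak rectangle $M=[a,b]\times[0,H]$, and a right part $P_{R}=P\cap\{x\ge b\}$.

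By construction $P_{L}$ is a monotone (ascending) staircase and $P_{R}$ a monotone (descending) staircase, each a histogram over the portion of $e$ beneath it, and each inherits the hypothesis that its bottom edge need not be seen. Hence the preceding observation applies to $P_{L}$ and to $P_{R}$: each admits an O-guard set that is colorable with a single color, i.e.\ a family of O-guards whose visibility regions are pairwise disjoint and together cover the part. The rectangle $M$ is convex, so one O-guard placed at a corner of $M$ sees all of it; I would fold this guard's region into whichever of the two staircase covers is convenient (or cover $M$ by extending a neighbouring guard's quadrant), so that the peak contributes no new color.

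It remains to argue that the union of the three covers still needs only one color, i.e.\ that guards taken from different pieces are compatible. The separation is provided by the tall vertical walls at the peak: because $H$ is the unique maximal height, $\partial P$ contains a vertical wall at $x=a$, from the height of the peak's left neighbour up to $H$, and symmetrically at $x=b$. I would instantiate the staircase observation so that every quadrant used inside $P_{L}$ opens to the left and downward and every quadrant used inside $P_{R}$ opens to the right and downward; then each region of a $P_{L}$-guard lies in $\{x\le a\}$ and each region of a $P_{R}$-guard lies in $\{x\ge b\}$, while $M$ lies in the slab $\{a\le x\le b\}$. The three families are then pairwise disjoint, one color suffices, and $\chi_{G}^{O}(P)=1$.

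The main obstacle is precisely this disjointness across the cuts, and it is more delicate than the wall alone suggests: the line $x=a$ is a genuine wall only above the height of the peak's left neighbour, whereas near the base $P_{L}$ and $M$ stay connected, so a carelessly oriented quadrant in $P_{L}$ could leak rightward into $M$ and overlap the guard covering $M$. I expect to handle this by using the hypothesis that the bottom edge is not seen to discard the thin basal strip where the leak occurs, and by assigning every leaked cell to a single owner when the staircase observation is instantiated; making the orientations chosen in that observation consistent with the two vertical cuts is the step that will require the most care.
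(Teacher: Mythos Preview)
The paper offers no proof of this observation beyond the reference to Figure~\ref{fi:fig6}; the statement is simply declared ``clear'' from the picture. Your proposal is therefore far more detailed than anything the paper supplies, and your structural reading---that a mount polygon is a unimodal histogram which splits at its peak into an ascending staircase $P_{L}$, a peak rectangle $M$, and a descending staircase $P_{R}$---is correct and is a natural route toward the preceding staircase observation.

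The genuine gap is exactly where you locate it, but it is more serious than your proposed patch. To keep the three families of guards pairwise compatible across the vertical cuts you require, \emph{in addition} to the conclusion of the staircase observation, that every guard used in $P_{L}$ open to the south--west and every guard used in $P_{R}$ open to the south--east. The staircase observation, however, only asserts that some $1$-colourable O-guard set exists; it gives you no control over orientations. Worse, the uniform SW orientation you impose on $P_{L}$ does not in general yield a valid $1$-colouring of $P_{L}$ itself: a SW-opening guard placed near the top of one column of the ascending staircase sees past the reflex corner below--left whenever that step is shallow enough, and its quadrant then overlaps the quadrant of the next SW guard, so pairwise disjointness already fails inside $P_{L}$. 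Your fallback---``discard the thin basal strip'' and ``assign every leaked cell to a single owner''---does not rescue this, because incompatibility is a property of overlapping visibility regions, not of which guard is nominally responsible for a cell. To make the reduction go through you would need to re-prove the staircase observation in a strengthened form that actually controls where the quadrants point; alternatively, and closer in spirit to what the paper's figure presumably conveys, guard the mount polygon directly from its $y$-monotone nested-slab structure so that compatibility is built in and no cut-crossing argument is needed.
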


If the bottom edge is not seen, then the visibility polygon of the O-guards in two different parts will not be incompatible. We place the green O-guards in the green areas and blue ones in the blue areas. Therefore, we can cover the entire snake polygon with at most two colors. So the chromatic guarding number of the snake polygon is less than or equal to 2.

\begin{obs}
The chromatic guarding number of the snake polygon is at most 2.
\end{obs}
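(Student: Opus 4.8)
The plan is to build directly on the two preceding observations, which already give $\chi_{G}^{O}=1$ for a single staircase part and for a single mount part under the hypothesis that the bottom edge is not seen, and to assemble these single-color guardings into a global two-coloring of the whole snake polygon. First I would fix the monotone direction; by the definition of a snake polygon we may assume $P$ is orthogonal $x$-monotone. Extending every h-cut edge cuts $P$ into the sub-polygons described above, and I would first verify that each resulting part is either a staircase polygon or a mount polygon and that the parts are linearly ordered along the $x$-axis. Because $P$ is $x$-monotone and conditions~2--3 of the definition force each extended h-cut to split off exactly three pieces with one $xy$-monotone piece, the adjacency graph of the parts is a single path $Q_{1},Q_{2},\dots,Q_{k}$, in which $Q_{i}$ and $Q_{i+1}$ meet precisely along one extended h-cut segment.

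Next I would two-color this path in the obvious alternating fashion, giving one color to the odd-indexed parts and the other to the even-indexed parts. Inside each part I place the O-guards guaranteed by the relevant observation, orienting each guard so that the shared h-cut segment with its neighbor plays the role of the unseen bottom edge; this is exactly the condition under which the observations yield a single color per part. Thus every part is covered monochromatically and all of $P$ is covered.

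It then remains to check that no two guards of the same color are incompatible, i.e.\ that their visibility regions are disjoint. Two equally-colored parts lie at distance at least $2$ in the path, hence are separated by at least one part of the opposite color. This is where the ``bottom edge not seen'' orientation does the work: since each guard fails to see the h-cut segment bounding its part, its region $V^{O}$ cannot leak past that segment into the neighboring part, and $x$-monotonicity prevents it from reaching any farther part. Consequently two equally-colored guards have disjoint visibility and are compatible, while adjacent parts already carry different colors; the coloring is then proper, and $\chi_{G}^{O}(P)\le 2$.

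The main obstacle will be the confinement claim in the last step: proving rigorously that, under the chosen orientation, no visibility region crosses an extended h-cut edge, so that the conflict graph of the parts is exactly the path (and therefore bipartite) rather than something denser. This reduces to a monotonicity-based analysis of where an O-guard that avoids its bottom edge can see, and it is the only place where the snake hypothesis, rather than general orthogonality, is genuinely needed.
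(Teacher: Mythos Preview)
Your proposal follows the same skeleton as the paper: extend all h-cut edges, observe that every resulting sub-polygon is a staircase or a mount, invoke the two preceding observations to guard each part with a single color under the ``bottom edge not seen'' orientation, and then argue that two colors suffice globally because that orientation prevents visibility from leaking across the h-cut boundaries. Where you diverge is in the 2-coloring rule itself: the paper assigns colors by \emph{type}---all staircase parts get one color, all mount parts get the other---whereas you assign colors by \emph{parity} along the path $Q_1,\dots,Q_k$. In a snake polygon the two schemes coincide when staircase and mount pieces strictly alternate, but your parity argument does not depend on that alternation and instead leans on the path being bipartite. The paper's version buys a slightly cleaner justification (it never needs to worry about which of the two h-cut boundaries of an interior part is the ``bottom''), while your version makes the conflict-graph structure explicit and isolates the confinement claim you flag at the end; in fact your write-up is more careful about that step than the paper's own sketch.
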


\section{A tight upper bound on the chromatic guarding number of the orthogonal polygon with O-guards}

In this section, we present a partitioning for the orthogonal polygon such that every part is a snake polygon. We construct two partitions in two orientations both horizontally and vertically. Call horizontal (vertical) partitioning h-cutting (v-cutting), then we nominate duality graph of h-cutting (v-cutting), h-tree (v-tree). Extend all h-cut (v-cut) edges in the polygon until intersect the boundary, start from the lowest (leftmost) part, the duality node corresponding to it must be root, then draw a directed edge from the root to all its neighbors, continue it recursively until the h-tree (v-tree) is built. See figure~\ref{fi:fig7} .

We modify any paths in the h-tree, so that duality of them will be snake polygons. For this purpose, we cut the v-cut edges occur in the path. We know that these removed parts will cover with guarding in the other orientation partitioning, certainly. See figure~\ref{fi:fig8}.

\begin{figure}
	\includegraphics[width=\columnwidth]{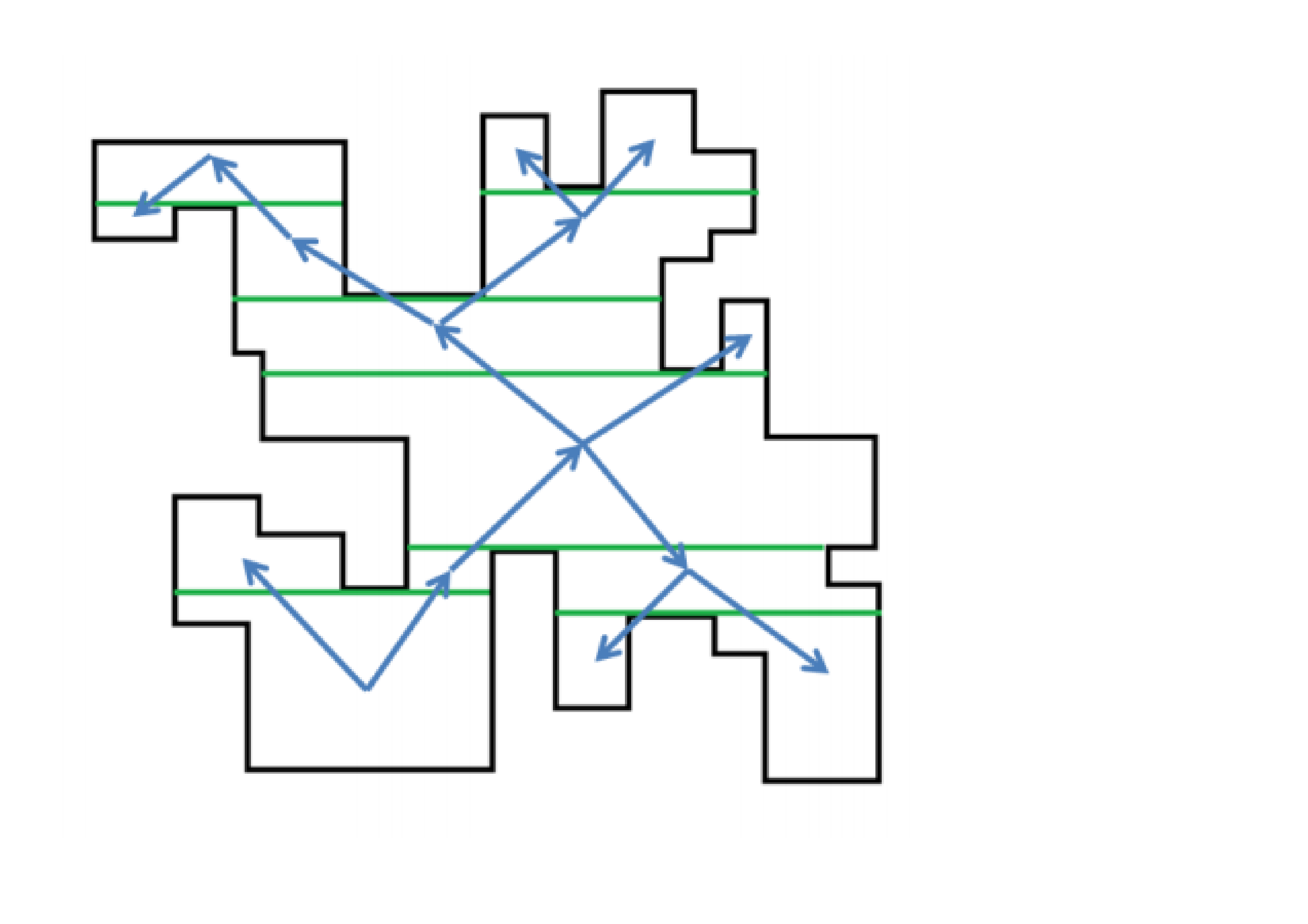}
	\caption{Every path in the duality graph decomposes to a snake polygon.}
	\label{fi:fig7}
\end{figure}

Similarly, consider this process for v-tree. If we find the chromatic guarding number to cover the h-tree, double of this number will be the chromatic guarding number for the entire orthogonal polygon. We know that every modified path in the duality tree has the chromatic guarding number at most 2 with the condition that the bottom edge is not seen. Follow this algorithm:\\
1. Find the path from the root to the leaf in every remained tree so that with removing it, the tree decomposes into at least two sub trees which their size are at most half of the total tree.\\
2. Guard all paths from previous step with two new colors.\\
3. Remove the path from the total tree and remain the sub tree(s).\\
4. If all remained sub trees are not empty, then go to step 1.\\

\begin{lemma}
\label{le:great200}
The number of iterations of the algorithm is $O(\log n)$.
\end{lemma}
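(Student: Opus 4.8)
The plan is to analyze the recursive decomposition described in the algorithm and show that the recursion depth, which equals the number of iterations, is logarithmic in the size of the tree (and hence in $n$, the number of vertices of the polygon). The crucial structural fact is supplied by step~1: in any tree we can always find a root-to-leaf path whose removal splits the remaining forest into subtrees each of size at most half the original. This is precisely the classical \emph{centroid path} (or heavy-path) phenomenon, so the first thing I would establish is that such a path always exists.

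\medskip

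\noindent\textbf{Existence of the halving path.} First I would prove the enabling claim: for every tree $T$ there is a root-to-leaf path $\pi$ such that every connected component of $T \setminus \pi$ has at most $|T|/2$ nodes. I would construct $\pi$ greedily from the root: at each node, descend into the child subtree of maximum size. When this walk reaches a leaf, consider any component $C$ hanging off the path $\pi$; $C$ is a subtree rooted at some child $c$ that was \emph{not} chosen at its branching node $v$. Since the greedy rule picked the heaviest child at $v$, the chosen subtree is at least as large as the subtree containing $C$, so the subtree containing $C$ has size at most $(|T|-1)/2 \le |T|/2$. This gives the desired bound on every hanging component.

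\medskip

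\noindent\textbf{Recurrence and depth bound.} With the halving property in hand, let $f(m)$ denote the number of iterations the algorithm performs on a tree with $m$ nodes. In one iteration we remove one path and recurse on the resulting subtrees, each of size at most $m/2$; since the subtrees are vertex-disjoint, the iterations they generate are disjoint stages of the recursion, and the \emph{number of rounds} is governed by the depth of this halving recursion rather than by the total work. Concretely, if we let $d(m)$ be the maximal number of successive rounds before all subtrees are empty, then $d(m) \le 1 + d(\lfloor m/2 \rfloor)$, which solves to $d(m) = O(\log m)$. Since the h-tree has $O(n)$ nodes (each node corresponds to one part of the h-cutting, and the number of parts is linear in the number of cut edges, hence in $n$), we conclude that the number of iterations is $O(\log n)$.

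\medskip

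\noindent\textbf{Expected main obstacle.} The delicate point is not the recurrence itself but pinning down what ``number of iterations'' means once the tree fragments into several subtrees in step~3. Read literally, step~1 reselects a halving path in \emph{every} remaining subtree and step~4 loops only while some subtree is nonempty, so the iterations across sibling subtrees proceed in parallel layers; I must argue that the governing quantity is the recursion \emph{depth} (giving $\log n$) and not the total count of paths extracted (which would be linear). I would make this precise by charging each node to the round in which it is first covered and showing that any node's round index is bounded by the number of halvings its containing subtree undergoes, which is $O(\log n)$ by the size-halving invariant. A secondary point worth a sentence is the bookkeeping that ties the abstract tree size back to $n$, so that $O(\log m)$ genuinely becomes $O(\log n)$.
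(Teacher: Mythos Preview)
Your proposal is correct and follows the same halving-recurrence approach as the paper, which simply writes $f(n)\le f(n/2)+2$ and concludes $f(n)\in O(\log n)$. Your treatment is considerably more thorough---you actually prove the existence of the heavy (centroid) path, clarify that iterations across sibling subtrees proceed in parallel layers so that depth rather than path-count is the relevant quantity, and tie the tree size back to $n$---all points the paper leaves implicit or unaddressed.
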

\begin{proof}
In iteration the size of the problem will be at most half of the pervious step then we can write:\\
$f(n)\leq f($$n\over{2}$$)+2 \,\, \Rightarrow\,\, f(n)\in O(\log n)$.
$f(n)$ means the complexity of problem.
\end{proof}

\begin{figure}
	\includegraphics[width=\columnwidth]{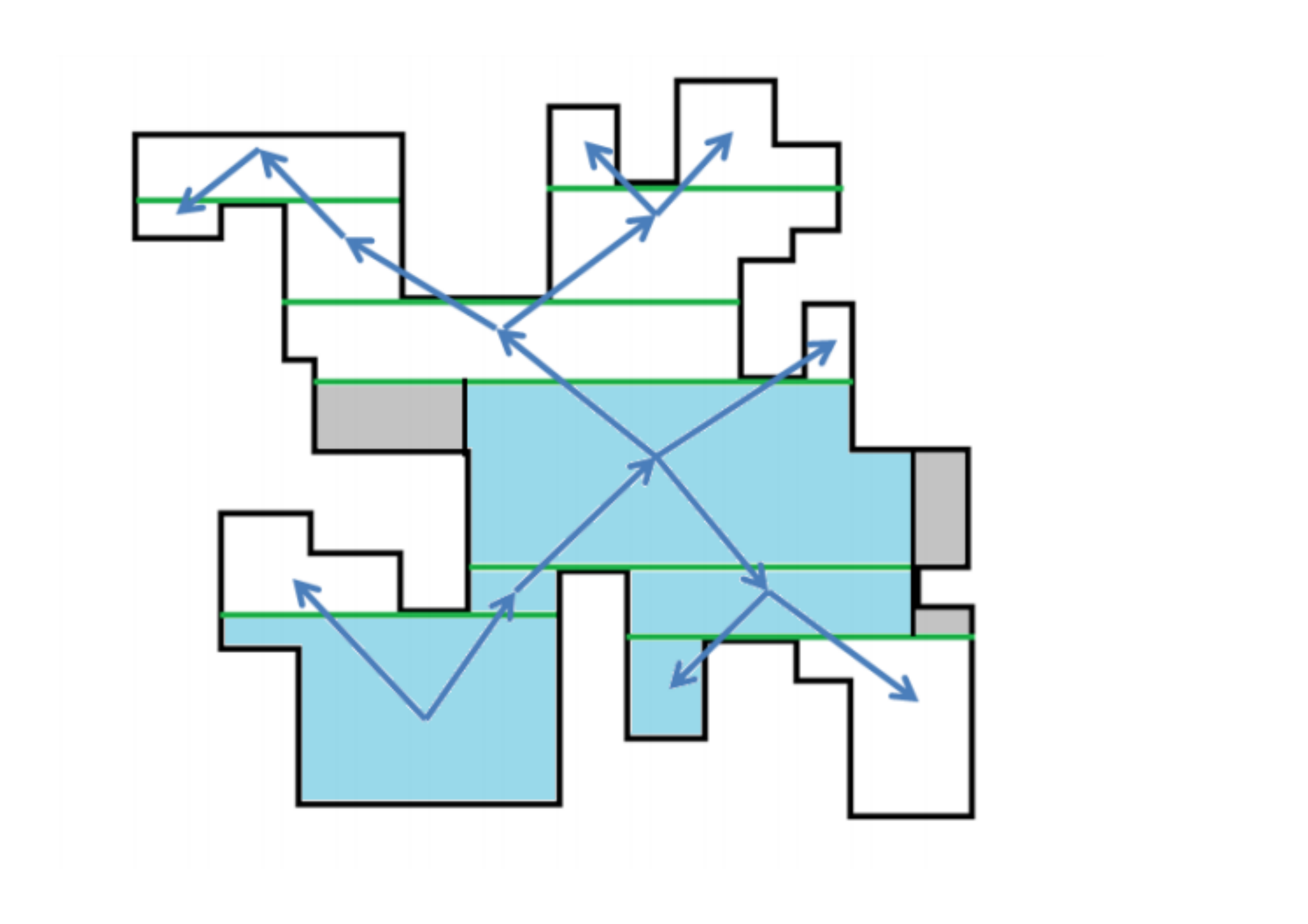}
	\caption{Partitioning of an orthogonal polygon.}
	\label{fi:fig8}
\end{figure}

\begin{theorem}
\label{le:great}
The chromatic guarding number of the orthogonal polygon is the logarithmic order.
\end{theorem}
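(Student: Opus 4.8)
The plan is to prove that $\chi_G^O(P) \in O(\log n)$ for any orthogonal polygon $P$ by combining the recursive path-decomposition algorithm with the earlier structural results on snake polygons. The overall strategy is to handle the two orientations separately and then add their costs. First I would recall that the algorithm operates on the h-tree: each modified root-to-leaf path, by construction, corresponds to a snake polygon whose bottom edge is not seen, and so by the Observation bounding the snake case it can be guarded with at most $2$ colors. The key combinatorial fact is supplied by Lemma~\ref{le:great200}: the recursion depth is $O(\log n)$ because each iteration extracts a centroid-style path that splits the remaining tree into subtrees each of size at most half the original.

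Next I would argue that the colors introduced at different iterations of the algorithm can be reused in a controlled way. The crucial observation is that at each level of the recursion the subtrees produced are disjoint and, because the bottom edges of their associated snake pieces are not visible across the cuts, O-guards placed in one subtree are conflict-free with O-guards placed in a sibling subtree. Therefore all the paths removed at a single recursion level can share the same pair of colors, and only across the $O(\log n)$ distinct levels do we need fresh colors. Summing two new colors per level over the $O(\log n)$ levels from Lemma~\ref{le:great200} yields an $O(\log n)$ bound on the number of colors needed to guard everything captured by the h-tree.

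The final step is to account for the parts that were discarded when straightening the paths into snakes. When we cut the v-cut edges to turn an h-tree path into a snake polygon, the removed slivers are deliberately left uncovered in the horizontal pass; I would invoke the paper's assertion that these are exactly the regions covered by the symmetric v-tree procedure. Running the identical algorithm on the v-tree costs another $O(\log n)$ colors, and since the two color classes can be taken disjoint, the total is at most $2 \cdot O(\log n) = O(\log n)$. This establishes $\chi_G^O(P) \in O(\log n)$.

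I expect the main obstacle to be the color-reuse argument in the middle step, namely rigorously justifying that guards living in different subtrees at the same recursion level are genuinely conflict-free. This relies on the geometric claim that the extending cuts block visibility between sibling regions, so that the ``bottom edge is not seen'' hypothesis of the snake-polygon observation truly holds after each decomposition; making this precise across an arbitrary recursion—rather than for a single cut—is the delicate part, and it is also where the correctness of merging the horizontal and vertical passes must be checked carefully.
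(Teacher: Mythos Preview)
Your proposal follows exactly the paper's approach: invoke Lemma~\ref{le:great200} to bound the number of colors for the h-tree decomposition by $O(\log n)$, then cover the slivers removed when straightening paths by running the symmetric procedure on the v-tree for another $O(\log n)$ colors, giving $O(\log n)$ in total. In fact you spell out more than the paper does---the paper's proof never makes the color-reuse across sibling subtrees explicit (it is hidden in the recurrence $f(n)\le f(n/2)+2$), whereas you correctly isolate this as the point needing geometric justification.
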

\begin{proof}
Using Lemma~\ref{le:great200}, we know that for guarding the h-cutting, we need $O(\log n)$ colors of guards, by this guarding, some parts of the polygon is not covered, because of the modified paths, that cut in the cross orientation. Nevertheless, we are guarding these remained parts during guarding v-cutting, completely. For it, we pay $O(\log n)$ colors of guards as cost. As a result, the chromatic guarding number for the orthogonal polygon belongs to $O(\log n)$.
\end{proof}

\section{An upper bound on the chromatic guarding number of the general polygon with O-guards}

We present an algorithm for the general polygon such that we remove an orthogonal polygon from it. By it, all remained parts are spiral polygons or triangles so that we can guard them with at most two colors compatibly. Follow this algorithm:
\\1. Draw from every vertex in the polygon a horizontal line.
\\2. From every intersection of lines and polygon, draw a vertical line.
\\3. From all rectangles so that occurs in the polygon completely, select it in the target orthogonal polygon. 
\\4. Consider remained parts, we can guard all disjoint polygon compatibly.\\

\begin{obs}
Every remained part is the spiral polygon so that reflex chain of it, is orthogonal. (Triangle is a special type of the spiral polygon).
\end{obs}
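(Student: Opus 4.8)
The plan is to read off the structure of every leftover region directly from the arrangement built by the preceding algorithm. Drawing a horizontal line through each vertex of $P$ (Step~1) slices $P$ into horizontal slabs whose interiors contain no vertex of $P$; consequently, inside a single slab the left and right portions of $\partial P$ are each a single, possibly slanted, straight segment. Adding the vertical lines through the intersection points of Step~2 cuts off, next to each such slanted segment, exactly the part of the slab that no axis-aligned rectangle can reach, while the central axis-aligned portion is precisely what Step~3 absorbs into the target orthogonal polygon. So I would first argue that what survives Step~3 is exactly the union of the regions hugging the non-axis-parallel edges of $\partial P$, and that each connected remaining component is a simple polygon.

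Next I would split the boundary of one remaining component into two chains and identify them. One chain is a contiguous piece of the original boundary $\partial P$, assembled from the slanted edges that the component is attached to; the other chain consists solely of the horizontal and vertical cut segments introduced in Steps~1--2. Because every cut is by construction parallel to a coordinate axis, this second chain is orthogonal, which already yields the ``reflex chain is orthogonal'' part of the statement, provided I can show that this cut chain is indeed the reflex chain of the component.

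The heart of the argument is the spiral property, i.e.\ that the reflex vertices of a remaining component are consecutive and coincide with the orthogonal cut chain. Here I would exploit that, within one slab, the relevant segment of $\partial P$ is a single straight segment, so along the $\partial P$-side the component can only turn convexly; all reflexivity must therefore be produced by the axis-parallel cuts on the opposite side, which the component sees as one monotone chain. The convex chain of the spiral is then the $\partial P$-part and the reflex chain is the orthogonal cut part, exactly as claimed. In the extremal case where $\partial P$ contributes a single edge and the cuts reduce to one horizontal and one vertical leg meeting at a right angle, the component degenerates to a right triangle, which is the parenthetical special case.

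The main obstacle I anticipate is precisely this last step: ruling out any interleaving of convex and reflex vertices, that is, proving that a remaining component never carries two separated reflex chains. Since a staircase of cuts that touches the slanted boundary pinches the leftover into several pieces, the delicate point is to show that between two consecutive contact points the component is either a single triangle or a spiral whose reflexivity is confined to one orthogonal chain, and that two slanted edges meeting at a convex vertex of $P$ yield components that meet only at a point rather than merging into a shape with two reflex chains. Once this single-reflex-chain property is secured, orthogonality of that chain follows immediately from the construction, completing the proof.
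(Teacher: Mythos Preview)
The paper offers no proof of this statement: it is recorded as an \emph{observation} and left entirely unjustified, with the surrounding text moving directly to the next observation and then to Lemma~\ref{le:great2}. Your sketch therefore already goes well beyond what the paper provides; there is nothing on the paper's side to compare against except the bare assertion.

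On its own merits, your outline isolates the right structure. The decomposition into horizontal slabs with a single straight boundary segment on each side, the identification of the leftover region as what hugs the non-axis-parallel edges, and the splitting of a component's boundary into a piece of $\partial P$ and a chain of axis-parallel cuts are all correct and are exactly what one needs. The claim that the cut chain is orthogonal is immediate, as you say.

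The step you flag as the obstacle is indeed the only substantive one, and you have not yet closed it. Two points deserve care. First, your sentence ``along the $\partial P$-side the component can only turn convexly'' implicitly assumes that every vertex of $P$ lying on that chain is convex in $P$; you address convex vertices in your final paragraph but say nothing about \emph{reflex} vertices of $P$ where two slanted edges meet. You would need to argue that the horizontal and vertical lines through such a reflex vertex always disconnect the leftover region there, so that no component inherits a reflex turn from $\partial P$. Second, even along a convex run of slanted edges you need to check that the staircase of cuts stays on one side and is monotone, so that the reflex chain does not split; this is the ``pinching'' phenomenon you mention, and it needs an actual case analysis rather than a pointer. Once those two checks are written out, the argument you propose is complete; the paper simply does not attempt it.
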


\begin{obs}
The chromatic guarding number for the spiral polygon so that the reflex chain is not seen with o-guard is at most 2.
\end{obs}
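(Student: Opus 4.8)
The plan is to exploit the structure granted by the preceding observation: the reflex chain of the spiral polygon $P$ is an orthogonal staircase, and the remainder of $\partial P$ is a single convex chain. First I would use the extensions of the edges of the reflex chain to partition $P$ into cells. Because the reflex chain is monotone and orthogonal, each extension line meets the opposite convex chain and cuts $P$ cleanly, so the induced cells inherit a linear order along the spiral. Each cell is then a rectangle, a staircase piece, or a mount piece whose base lies on the corresponding extension line, which places it directly under the scope of the special-polygon observations already proved.

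Next I would place a single O-guard inside each cell, orienting its $90$-degree field so that the cell's base edge plays the role of the ``bottom edge that is not seen.'' By the two established observations (for staircase and for mount polygons with unseen bottom edge, $\chi_{G}^{O}=1$), each such guard alone covers its own cell. Since by hypothesis the reflex chain itself need not be seen, this family of guards covers everything we are required to cover. I would then color the guards by alternating along the linear order of the cells, assigning odd-indexed cells the first color and even-indexed cells the second; to justify that this is a proper strong coloring I must show the conflict graph is bipartite, and in fact a path.

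The main obstacle, and the heart of the argument, is to prove that two guards conflict only when their cells are consecutive. For consecutive cells the shared extension line may let the two quadrant-visibility regions touch, forcing distinct colors, which the alternation supplies. For non-consecutive cells $i$ and $j$ I would argue that some extension line of an intervening reflex edge separates $P$, with cell $i$ strictly on one side and cell $j$ strictly on the other; because each O-guard's field is confined to its cell's side of its base line and the orthogonal monotonicity of the reflex chain prevents the spiral from folding a visibility region back across that separating line, $V^{O}(g_{i})\cap V^{O}(g_{j})=\emptyset$. Establishing this separation rigorously, in particular ruling out that the winding of the spiral brings distant cells' visibility regions into contact, is the delicate part, and it is precisely where the assumption that the reflex chain is not seen is used. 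Once the conflict graph is shown to be a path, its $2$-colorability is immediate, yielding $\chi_{G}^{O}(P)\le 2$.
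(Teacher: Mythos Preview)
The paper itself offers no proof of this observation: it is stated bare, followed only by the sentence ``The reason of this condition that the reflex chain doesn't must be seen, is the spiral parts can guard independently,'' which motivates the hypothesis but says nothing about why two colors suffice inside a single spiral piece. So there is no authors' argument to compare against; your proposal is already far more detailed than anything in the text.

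That said, your plan has a genuine gap. The cells you produce by extending the edges of the orthogonal reflex chain are bounded on one side by a piece of the \emph{general} convex chain of the spiral, so they are not orthogonal polygons. Consequently they are neither staircase polygons nor mount polygons in the sense of the paper's Definitions and Observations, and you cannot invoke the results ``$\chi_G^O=1$ for staircase/mount with unseen bottom edge'' to handle them. You also slide from ``$\chi_G^O=1$'' (one \emph{color} suffices, possibly with many guards) to ``a single O-guard covers its own cell,'' which is a stronger claim and not what those observations assert even for orthogonal pieces.

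Both issues are repairable, but not by citation. What you actually need is a direct argument that each cell is star-shaped from a suitable corner within a single axis-aligned quadrant, so that one O-guard placed at that corner, with its $90^\circ$ field oriented into the cell, sees the whole cell; this follows from the convexity of the non-orthogonal side together with the fact that the two orthogonal sides meet at that corner. Once you establish that, your linear ordering and alternating two-coloring go through, and the separation argument you sketch for non-consecutive cells is the right way to finish. In short: keep the decomposition and the path-conflict-graph idea, but replace the appeal to the earlier observations with a short self-contained visibility argument for the (non-orthogonal) cells.
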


The reason of this condition that the reflex chain doesn't must be seen, is  the spiral parts can guard independently.

\begin{lemma}
\label{le:great2}
The chromatic guarding number for the general polygon is $O(\log n)$.
\end{lemma}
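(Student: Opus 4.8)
The plan is to combine the four-step decomposition stated just above with Theorem~\ref{le:great}. First I would run the algorithm on the general polygon $P$: the horizontal lines through every vertex, the vertical lines through every resulting intersection, and the selection of all full rectangles produce a target \emph{orthogonal} subpolygon $Q \subseteq P$ together with a collection of leftover pieces $R_1, \dots, R_k$. The first goal is structural: by the first observation preceding this lemma, each $R_i$ is a spiral polygon whose reflex chain coincides with the axis-parallel staircase boundary it shares with $Q$. I would verify this by checking that the grid construction can only leave behind regions bounded by one chain of original polygon edges and one orthogonal reflex chain, which is precisely the structure required for the second observation to apply.

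Next I would bound the two halves of the coloring separately. For the orthogonal part $Q$, Theorem~\ref{le:great} gives an O-guard set using $O(\log n)$ colors, since $Q$ is an orthogonal polygon on at most a polynomial number of vertices. For the spiral leftovers, the second observation says each $R_i$, guarded under the condition that its reflex chain is not seen, admits an O-guard set using at most $2$ colors. The remaining task is to show that a single palette of $2$ colors suffices for \emph{all} of $R_1, \dots, R_k$ simultaneously, and that this palette can coexist with the $O(\log n)$ colors spent on $Q$.

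The crux — and the step I expect to be the main obstacle — is the compatibility argument that justifies reusing colors across pieces. I would place the O-guards of each $R_i$ so that the orthogonal reflex chain separating $R_i$ from the rest of $P$ is never seen; because O-guards have bounded $90^\circ$ visibility, this reflex chain blocks visibility into the neighboring regions, so $V^{O}(g)$ for a guard $g$ in $R_i$ stays inside $R_i$. Consequently any two guards lying in distinct leftover pieces, and any guard in a leftover piece versus any guard in $Q$, have disjoint visibility polygons and are therefore compatible. This independence is exactly what lets me reuse the same two colors on every $R_i$, drawn from a palette that freely combines with the colors used for $Q$.

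Finally I would add up the costs: $O(\log n)$ colors for the orthogonal part $Q$ from Theorem~\ref{le:great}, plus a constant number of colors for the union of spiral remnants, giving $O(\log n) + O(1) = O(\log n)$ colors in total, which establishes the claim. The delicate point throughout is not the counting but the geometric verification that the decomposition is genuinely conflict-free, i.e.\ that the ``reflex chain not seen'' hypothesis of the two observations can be enforced simultaneously on every piece produced by the algorithm.
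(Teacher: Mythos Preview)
Your proposal follows exactly the paper's route: extract an orthogonal subpolygon $Q$ via the grid construction, invoke Theorem~\ref{le:great} on $Q$ (noting, as the paper does, that $Q$ may have $O(n^2)$ vertices but $\log(n^2)=O(\log n)$), handle each spiral remnant with two colors under the reflex-chain-not-seen condition, and sum. One small overreach: from $V^{O}(g)\subseteq R_i$ for guards $g$ placed in $R_i$ you get mutual compatibility among the $R_i$'s, but this does \emph{not} by itself show that a $Q$-guard cannot see into some $R_i$; since your final tally $O(\log n)+O(1)$ treats the two palettes as disjoint anyway, the bound is unaffected and your argument is at least as complete as the paper's one-line proof.
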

\begin{proof}
Decompose a general polygon into an orthogonal polygon that its vertices can be $O(n^2)$, nevertheless the $\chi_{G}^{O}(P)\in O(\log⁡ n)$ , for remained spiral polygon  $\chi_{G}^{O}(P)\leq 2$,  therefore, for the general polygon P, we have:\,\,
$\chi_{G}^{O}(P)\in O(\log⁡ n)$

\end{proof}

\section{A tight bound on the rectangular chromatic guarding number of the orthogonal polygon}
\begin{obs}
The rectangular chromatic guarding number of the mounts polygon belongs to $\theta(\log n)$.
\end{obs}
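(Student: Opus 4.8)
The plan is to prove the two matching bounds separately. For the upper bound $\chi_G^{rec}(P)\in O(\log n)$, I would exploit the defining property of a mounts polygon: since its base edge $e$ satisfies $V^{\bot}(e)=P$, the polygon behaves like a histogram erected over the base interval, and rectangular visibility reduces to a simple interval condition. Concretely, writing the upper boundary as a height profile over the base, a point sitting on a column at horizontal coordinate $x_0$ is rectangularly visible from a point $q=(x_1,y_1)$ precisely when every column strictly between $x_0$ and $x_1$ has height at least $y_1$. With r-visibility phrased this way I can mimic the divide-and-conquer of Lemma~\ref{le:great200}: pick a vertical separator that splits the histogram into two sub-mounts each carrying at most half of the vertices, cover the strip straddling the separator with a constant number of fresh colors, and recurse into the two halves. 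Because a sufficiently tall separating column prevents a single axis-aligned rectangle from joining a guard on its left to a guard on its right, the r-guards of the two halves are never rectangularly incompatible, so colors may be reused freely across sibling subtrees. A constant number of new colors at each of the $O(\log n)$ recursion levels then yields $O(\log n)$ colors overall.

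For the lower bound $\chi_G^{rec}(P)\in\Omega(\log n)$, I would construct a self-similar family of mounts polygons $M_k$ with $n=\Theta(2^k)$ vertices and argue that every valid r-guard set needs at least $k=\Omega(\log n)$ colors. The idea is to nest $k$ scales of rectangular pockets along the base edge so that, at each scale, the region that must be covered can only be covered by r-guards whose rectangular visibility areas overlap those of the guards responsible for the next coarser scale. I would then show, by induction on the nesting depth, that any r-guarding of $M_k$ induces a chain of $k$ pairwise-incompatible guards, which forces $k$ distinct colors and hence $\chi_G^{rec}(M_k)\ge k=\Omega(\log n)$. Combining the two directions gives $\chi_G^{rec}(P)\in\Theta(\log n)$.

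The main obstacle I expect is the lower bound. The upper bound is an essentially mechanical adaptation of the existing recursion once r-visibility inside a histogram is expressed through the interval condition above. The delicate part of the lower bound is engineering $M_k$ so that simultaneously (i) it remains a genuine mounts polygon at every scale, i.e.\ its base edge still orthogonally sees the entire figure; (ii) the vertex count stays within a constant factor of $2^k$, so that the forced number of colors is truly $\Omega(\log n)$; and (iii) the conflict really propagates through all $k$ scales. The crux of (iii) is certifying that no clever placement of r-guards can cover an inner pocket without creating an axis-aligned rectangle that meets the rectangular visibility region of a guard at a coarser scale — in other words, that the adversary forces a chain of genuinely mutually incompatible guards rather than merely $k$ guards that could be recolored. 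Verifying this incompatibility geometry at every level is where I would concentrate the effort.
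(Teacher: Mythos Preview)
The paper itself offers no argument for this observation, so there is nothing to compare your plan against; you would in fact be supplying what the paper omits.  That said, the upper-bound half of your plan has a genuine gap.  You want a vertical separator that (a) leaves at most half the vertices on each side and (b) is ``sufficiently tall'' so that r-guards in the left piece are never incompatible with r-guards in the right piece.  These two requirements are in tension: in a histogram, two guards placed on opposite sides of an arbitrary balanced cut can both rectangularly see a common low point whenever the cut column happens to be short, so the two recursive subproblems are \emph{not} independent and colours cannot simply be reused across them.  Conversely, cutting at the globally tallest column does isolate the two sides under r-visibility, but that split can be arbitrarily unbalanced and pushes the recursion depth to $\Theta(n)$ rather than $O(\log n)$.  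A correct $O(\log n)$ argument needs an additional structural idea --- for instance, building the Cartesian (min-height) tree of the histogram and then running a heavy-path or centroid decomposition on that tree so that only $O(\log n)$ levels of genuinely conflicting base rectangles occur --- rather than a single balanced vertical cut.  As a minor point, your interval description of r-visibility is slightly off: the axis-aligned rectangle with corners $(x_0,y_0)$ and $(x_1,y_1)$ lies in the histogram iff every column between $x_0$ and $x_1$ has height at least $\max(y_0,y_1)$, not merely $y_1$.

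Your lower-bound plan (a self-similar family $M_k$ with $\Theta(2^k)$ vertices forcing a length-$k$ chain of pairwise incompatible r-guards) is the standard and appropriate strategy, and the three design constraints you list are exactly the ones that matter; once $M_k$ is built correctly the induction you describe goes through.
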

We found that the rectangular chromatic guarding number of mounts polygon is the logarithmic order. At present, we want to extend this result for all orthogonal polygons as well. It does with partitioning the polygon into the mounts polygons. Find the lowest edge of the polygon, nominate e and process $V^{\bot} (e)$. Suppose the boundary of $V^{\bot} (e)$, every connected part of the boundary which is not part of the boundary of the total polygon is called window.

\begin{obs}
The connected window in $V^{\bot} (e)$ is line segment.
\end{obs}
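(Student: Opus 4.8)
The plan is to give an explicit description of $V^{\bot}(e)$ as a ``region under a ceiling'' and then read off the structure of its boundary. Since $e$ is the lowest horizontal edge of the orthogonal polygon, I would first argue that a point $q=(x,y)$ lies in $V^{\bot}(e)$ exactly when $x$ lies in the horizontal extent $[x_L,x_R]$ of $e$ and the entire vertical segment from $(x,y_e)$ up to $q$ stays inside $P$, where $y_e$ is the height of $e$; intuitively, $q$ must see straight down to $e$. Because the ray shot upward from $(x,y_e)$ starts inside $P$ and leaves it at the first point of $\partial P$ above, the visible portion of each column is a single interval $[y_e,h(x)]$, where $h(x)$ is the height of that first boundary hit. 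As $P$ is orthogonal, $h$ is a step function with finitely many breakpoints, and
\[
V^{\bot}(e)=\{(x,y):x_L\le x\le x_R,\ y_e\le y\le h(x)\}.
\]

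Next I would analyse $\partial V^{\bot}(e)$ piece by piece. It splits into three kinds of axis-parallel pieces: the bottom edge $e$; the horizontal ceiling pieces, which together form the graph of $h$; and the vertical pieces occurring at the breakpoints of $h$ and at the two ends $x=x_L$ and $x=x_R$. The crucial point is that every horizontal ceiling piece already lies on $\partial P$: by definition $h(x)$ is the first point of $\partial P$ encountered going up, so each maximal horizontal part of the ceiling is a horizontal edge of $P$. Consequently no window can be horizontal, and a window can only consist of vertical pieces.

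It then remains to see that a connected window is a single vertical segment rather than a union of several. Each vertical piece of $\partial V^{\bot}(e)$ sits on one vertical line $x=x_0$ (a breakpoint of $h$ or an endpoint). Walking such a segment from bottom to top, a sub-interval belongs to $\partial P$ exactly where the region immediately on the far side leaves $P$ (a genuine wall of the polygon), and it is a window exactly where that far side is still inside $P$ but occluded from $e$. Hence the window contributed at $x_0$ is the union of the occluded-side sub-intervals; any two of them are separated along $x=x_0$ by a wall lying on $\partial P$, so they fall into different connected components of $\partial V^{\bot}(e)\setminus\partial P$. Moreover two vertical pieces on distinct lines can only meet through a corner joined by a horizontal ceiling piece, which lies on $\partial P$ and is therefore excluded. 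Thus each connected component of the window is contained in one vertical line and is a single sub-interval, i.e.\ a single vertical line segment, as claimed.

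The step I expect to be the main obstacle is the second one: proving cleanly that the entire ceiling is forced onto $\partial P$, so that no part of a window can be horizontal. This is where the orthogonality of $P$ and the fact that $e$ is a lowest edge (so the upward ray meets $\partial P$ transversally along a horizontal edge) must be used carefully, in particular to rule out degenerate hits at reflex vertices where the ceiling might otherwise appear to leave the boundary; once that is settled, the reduction of every window to a vertical segment is immediate.
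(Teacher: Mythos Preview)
The paper states this as an observation without any accompanying proof; it is treated as self-evident and immediately used to recurse on the windows. Your argument is therefore not a comparison target but a genuine addition: it supplies the justification the paper omits, and the route you take (describe $V^{\bot}(e)$ as the hypograph of the step function $h$, then classify the boundary pieces) is correct and is the natural one.

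One remark on the step you flag as the ``main obstacle''. It is actually immediate and needs no delicate handling of reflex vertices: by your own definition, $h(x)$ is the height at which the upward ray from $(x,y_e)$ first meets $\partial P$, so the point $(x,h(x))$ lies on $\partial P$ for every $x\in[x_L,x_R]$. Hence every horizontal ceiling piece, being a subset of the graph of $h$, already sits on $\partial P$; nothing about transversality or degenerate hits is required. The genuine content of your argument is the third step, where you show that along each vertical line $x=x_0$ the window sub-intervals are separated by pieces of $\partial P$, and that distinct vertical lines are only connected through ceiling pieces on $\partial P$. That part is argued correctly.

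A minor caveat: you assume throughout that $e$ is the lowest edge of $P$, which matches the paper's setup when the observation is first stated. The paper then recurses, replacing $e$ by a window, and implicitly relies on the same observation at each step. Your argument carries over verbatim once one restricts to the sub-polygon on the far side of the window (where the window now plays the role of a boundary edge), but it is worth saying this explicitly if you intend the proof to cover the full recursion.
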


Now, repeat finding $V^{\bot} (e)$ for every obtained window since the entire polygon is supported. See figure~\ref{fi:fig0006}.

\begin{figure}
	\includegraphics[width=\columnwidth]{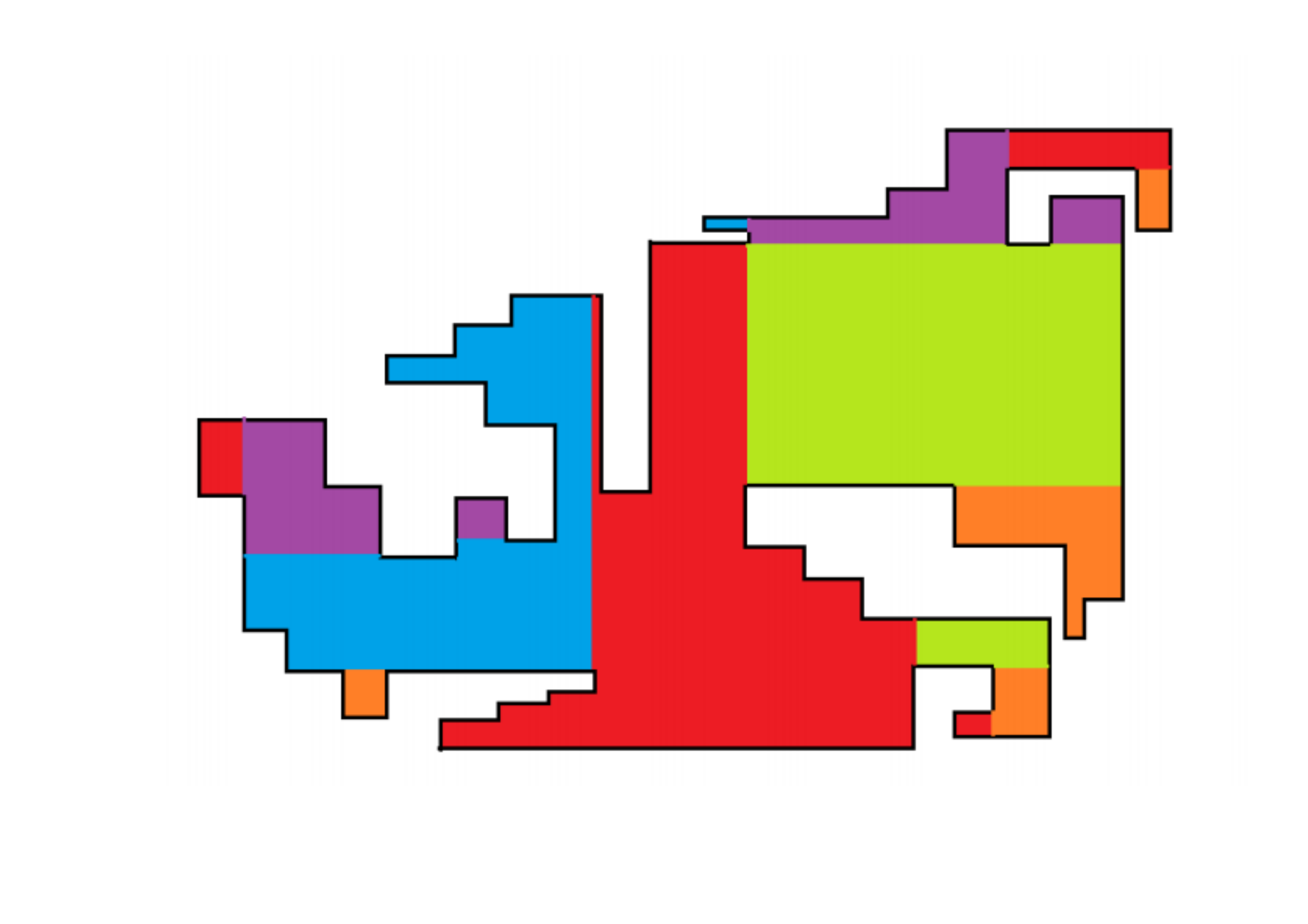}
	\caption{Partitioning the orthogonal polygon to the mounts polygons.}
	\label{fi:fig0006}
\end{figure}

Now, we demonstrate that if we can cover the mounts polygon with $\log n$ colors, then we will can cover all parts of the partitioning with $5\log n$ colors, in other words:\,
$\chi_G^{rec} (P_{orthogonal})\in \theta(\log⁡ n)$.\\
If a mounts polygon part has the window in its boundary, then it can have incompatibility with other parts. The incompatibility area is a rectangle such that its width is equal to width of the window. Start from first part, color it with the first color and color all parts that immediately left of it (call Left-Parts) with the second color and all parts that immediately right of it (call Right-Parts) with the third color. So, color all parts top of Left-Parts and Right-Parts with 4th color and all parts bottom of Left-Parts and Right-Parts with 5th color. Any parts in the same color want the same log n colors of r-guards. By this strategy, we can color the remained parts with the same five colors so that showed in figure~\ref{fi:fig0006}.

\begin{obs}
The same colored parts are compatible.
\end{obs}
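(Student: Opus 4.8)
The plan is to reduce the statement to a purely local geometric disjointness claim, leaning on the two facts already established: that two distinct parts of the partition can be incompatible only through a shared window, and that each such incompatibility is confined to a rectangle whose extent equals that of the window. Since every part other than the first is carved off from its parent across exactly one window, the parts together with their window-adjacencies form a rooted tree, and an incompatibility between two parts forces their associated conflict rectangles to meet. So I would first make the conflict relation precise: parts $A$ and $B$ are incompatible if there exist r-guards $s\in A$, $t\in B$ with $V^{rec}(s)\cap V^{rec}(t)\neq\emptyset$. By the preceding observation this can only occur when the rectangular visibility of a guard in one part reaches across a window into the region governed by the other, so the whole incompatibility is trapped inside the window-width rectangle attached to the separating window. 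Hence it suffices to show that, within each of the five color classes, these conflict rectangles are pairwise disjoint.

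Second, I would verify disjointness class by class using the labels assigned during the coloring. The first part carries a unique color, so its class is trivially conflict-free. For the Left-Parts (and symmetrically the Right-Parts), each such part sits immediately to one side of the first part and its window opens over a horizontal interval; because these intervals are projections of disjoint windows lying on the same side, the corresponding rectangles occupy disjoint horizontal ranges and cannot intersect. For the fourth and fifth classes (parts lying above, respectively below, the Left/Right parts), I would argue that a part above and a part below are separated by the horizontal slab of their common ancestor, so their conflict rectangles fall in disjoint vertical ranges, while within the "top" class the separation is again horizontal and comes from the disjointness of the windows feeding those parts.

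The main obstacle I expect is ruling out \emph{long-range} conflicts: a priori an r-guard could see across one window and, through the recursive structure, have its rectangular visibility overlap that of a guard two or more levels away, so that same-colored but non-adjacent parts conflict. The point to nail down is that rectangular visibility from a guard inside a mounts part is blocked at the first window it crosses: since by the earlier observation every connected window in $V^{\bot}(e)$ is a line segment, and beyond it the geometry of the partition forces an occluding reflex turn, a conflict rectangle never penetrates past the immediate neighbor. Once this "one-window-deep" confinement is secured, the incompatibility graph coincides with the window-adjacency tree augmented only by conflicts between parts sharing a common parent, and the five-coloring described above is precisely a proper coloring of that structure, which yields the observation.
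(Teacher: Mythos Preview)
The paper gives no proof of this observation at all; it is stated bare, with only the preceding sentence and a reference to Figure~9 as justification. So there is no argument in the paper to compare against, and your proposal is already far more detailed than anything the authors supply.

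That said, there is a genuine gap in your proposal. You write that ``the first part carries a unique color, so its class is trivially conflict-free,'' but the paper explicitly says that after the initial five groups are colored, ``we can color the remained parts with the same five colors.'' All five colors, including the first, are reused as the procedure is iterated over the remaining parts; no color class is a singleton. Your case analyses for classes~2--5 have the same defect: you argue only that the \emph{initially} assigned Left-Parts (resp.\ Right-Parts, top, bottom) are mutually compatible, not that they remain compatible with every later part receiving the same color during the recursion. Since the paper's description of that recursion is delegated to the figure, you would first need to pin down precisely how the five labels propagate before the class-by-class disjointness argument can even be formulated.

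A second, smaller issue: your ``one-window-deep'' confinement is the right thing to isolate, but the justification (``the geometry of the partition forces an occluding reflex turn'') is asserted rather than proved. Incompatibility of r-guards means $V^{rec}(s)\cap V^{rec}(t)\neq\emptyset$ \emph{anywhere} in $P$, not merely that one guard sees into the other's part; so a guard in part~$A$ and a guard in a non-adjacent part~$C$ could in principle both see into an intermediate part~$B$ and conflict there. Ruling this out requires more than noting that windows are segments.
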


\begin{theorem}
\label{le:great6}
The rectangular chromatic guarding number that always sufficient and sometimes necessary for covering an orthogonal polygon is O(log ⁡n).
\end{theorem}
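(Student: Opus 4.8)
The plan is to establish matching upper and lower bounds, since the phrase \emph{always sufficient and sometimes necessary} means the claimed quantity is in fact $\theta(\log n)$. For the upper bound I would first invoke the partitioning described just above the statement: starting from the lowest edge $e$ and repeatedly computing $V^{\bot}(e)$ for $e$ and for every window produced, the polygon decomposes into mounts polygons, and by the preceding observation each connected window is a single line segment. The crucial structural claim to nail down is that two distinct parts of this decomposition can be incompatible only \emph{through a shared window}, and that the resulting incompatibility region is a rectangle whose width equals the width of that window. I would make this precise by arguing that an r-guard placed in one part can rectangularly see into a neighbouring part only across the common window, and only within the aligned rectangle spanned over that segment.

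Next I would colour the parts themselves (not the guards) with the five colours of the stated scheme: the seed part receives the first colour, its immediate Left-Parts and Right-Parts receive the second and third, and the parts lying directly above and below these receive the fourth and fifth. The key auxiliary fact here is exactly the observation that \emph{same-coloured parts are compatible}, i.e.\ no two parts sharing one of the five colours have overlapping window-induced conflict rectangles. I would verify this by a short case analysis on how the window rectangles of adjacent parts in the mounts decomposition can overlap, showing that the left/right/top/bottom adjacency is captured faithfully by the five classes.

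Having fixed the five-colouring of the parts, I would guard each colour class independently. Because same-coloured parts are pairwise compatible, the identical palette of r-guard colours can be reused across every part of one class without creating any incompatible pair. By the mounts-polygon observation a single part needs only $O(\log n)$ r-guard colours, so one class costs $O(\log n)$ colours and the five classes together cost $5\cdot O(\log n)=O(\log n)$, which yields the \emph{always sufficient} direction. For the \emph{sometimes necessary} direction I would simply note that a mounts polygon is itself an orthogonal polygon, and its rectangular chromatic guarding number is already $\theta(\log n)$; hence there exist orthogonal polygons forcing $\Omega(\log n)$ colours, making the upper bound tight.

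The main obstacle is the second step: verifying rigorously that same-coloured parts are compatible. This demands controlling how the many window conflict rectangles produced throughout the recursive $V^{\bot}(e)$ decomposition interact, and confirming that the genuine adjacency between mounts parts is fully described by the simple seed/left/right/top/bottom pattern rather than by some longer-range interference. Once that combinatorial bound on the interaction pattern is secured, the colour reuse argument and the complexity bookkeeping are routine, and the lower bound follows directly from the mounts-polygon result.
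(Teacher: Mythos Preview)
Your proposal is correct and follows essentially the same route as the paper: decompose the orthogonal polygon into mounts parts via the recursive $V^{\bot}(e)$ procedure, apply the five-class colouring of parts together with the observation that same-coloured parts are compatible to reuse a single $O(\log n)$ palette per class, and invoke the mounts-polygon $\theta(\log n)$ bound for the lower bound. The only difference is that you flag the compatibility of same-coloured parts as the step requiring real work, whereas the paper simply asserts it as an observation.
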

\begin{proof}
We found that there is an orthogonal polygon which is needed $\theta(\log n)$  as its rectangular chromatic guarding number, and using above method are shown that $\theta(\log n)$ is sufficient for all orthogonal polygons, so the proof is completed.
\end{proof}

\section{Conclusion}

In this paper, we explained the chromatic art gallery on the simple polygon with 90-guards that its visibility field of them is one of the (0, 90), (90,180), (180,270) or (270,360). This type of guards named O-guard. In many cases in real-world, the guards can see a limited angle of its around. This motivated that we define the new version of the chromatic art gallery problem so that increase the conflict-free guards from lower bound $n\over 4$ to upper bound $O(\log n)$. Then, we explained the chromatic art gallery for the orthogonal polygon with a special type of guards are called r-guards. This type of guards has r-visibility such that any two points $p$ and $q$ in a rectilinear polygon $P$ are called r-visible, if the aligned rectangle with $p$ and $q$ as opposite corners lies totally inside $P$.
The applications of this problem are in mobile network coverage, remote controlling, wireless communications and etc. this research showed that by decreasing the angle field of Transmitters, we can save the frequency of the waves for the communications.



\small
\bibliographystyle{abbrv}

\end{document}